\newcommand{\PAGENUMBERS}{yes}       
\newcommand{\COLOR}{yes}
\newcommand{\comment}[1]{}
\newcommand{\onlyAbstract}{no}
\newcommand{\vone}{\mat{1}}
\newcommand{\mident}{\mat{I}}
\newcommand{\mleq}{\preccurlyeq}
\newcommand{\mgeq}{\succcurlyeq}
\newcommand{\poly}[1]{\textrm{poly}(#1)}
\setlist{itemsep=0pt,parsep=0pt}             
\definecolor{placeholderbg}{rgb}{0.85,0.85,0.85}
\newcommand{\procName}[1]{\mbox{\texttt{#1}}\xspace}
\date{}
\title{Faster and Simpler Width-Independent Parallel Algorithms for Positive Semidefinite Programming}
\newcommand*{\email}[1]{\mbox{\small\texttt{#1}}}
\author{
Richard Peng
\\Georgia Tech
\footnote{Part of this work was done while at Carnegie Mellon University and while at M.I.T.}
\\\email{rpeng@cc.gatech.edu}
\and
Kanat Tangwongsan
\\Mahidol University International College
\footnote{Part of this work was done while at Carnegie Mellon University}
\\\email{kanat.tan@mahidol.edu}
\and
Peng Zhang
\\Georgia Tech
\\\email{pzhang60@gatech.edu}
}
\begin{document}



\maketitle

\begin{abstract}
  This paper studies the problem of finding an $(1+\vareps)$-approximate solution
  to positive semidefinite programs.
  These are semidefinite programs in which all matrices in the constraints
  and objective are positive semidefinite and all scalars are non-negative.

  We present a simpler \NC parallel algorithm that on input with $n$ constraint
  matrices, requires $O(\frac{1}{\vareps^3} \log^3 n)$
  iterations, each of which involves only simple matrix operations and computing
  the trace of the product of a matrix exponential and a positive semidefinite
  matrix.
  Further, given a positive SDP in a factorized form, the total work of
  our algorithm is nearly-linear in the number of non-zero entries in the factorization.
  
\end{abstract}



\ifthenelse{\equal{\onlyAbstract}{no}}{%
\section{Introduction}
\label{sec:intro}

\newcommand{\maxcut}{\textsc{MaxCut}\xspace}
\newcommand{\spcut}{\textsc{Sparsest Cut}\xspace}

Semidefinite programming (SDP), alongside linear programming (LP), is an
important tool in approximation algorithms, optimization, and discrete
mathematics.  In the context of approximation algorithms alone, it has emerged
as a key technique which underlies a number of impressive results that
substantially improve the approximation ratios.  To solve a semidefinite
program, algorithms from the linear programming literature such as Ellipsoid or
interior-point algorithms~\cite{GLS:book93} can be applied to derive near exact
solutions.  But they are often costly.  As a result, finding efficient
approximations to such problems is a critical step in making them more
practical.

From a parallel algorithms standpoint, both LPs and SDPs are \PCLASS-complete
even to approximate to any constant accuracy, suggesting that it is unlikely
that they have a polylogarithmic depth algorithm.  For linear programs, however,
the special case of positive linear programs, first studied by Luby and
Nisan~\cite{LubyN:STOC93}, has an algorithm that finds a
$(1+\vareps)$-approximate solution in $O(\poly{\tfrac1{\vareps}\log{n}})$
iterations.  This weaker approximation guarantee is still sufficient for
approximation algorithms (e.g., solutions to vertex cover and set cover via
randomized rounding), spurring interest in studying these problems in both
sequential and parallel contexts (see, e.g.,
\cite{LubyN:STOC93,PlotkinST95,GargK98,Young01:FOCS01,
  KoufogiannakisY:focs07,KoufogiannakisY09:PODC09}).

The importance of problems such as \maxcut and \spcut has led to the
identification and study of positive SDPs.
Our work is motivated by a result by Jain and Yao~\cite{JainY:focs11}
that gave the first positive SDP algorithm whose work and depth are
independent of the width parameter
(commonly known as width-independent algorithms).
As there are substantial differences in the analysis in this version compared to the
conference version~\cite{PengT12}, we address the relation between our paper
and other works in Section~\ref{subsec:related}.

We present a simple algorithm that offers guarantees similar to~\cite{JainY:focs11}
but has less work-depth complexity.
Each iteration of our algorithm involves only simple matrix
operations and computing the trace of the product of a matrix exponential and a
positive semidefinite matrix.
The input consists of an accuracy parameter $\vareps > 0$ and a positive
semidefinite program (PSDP) in the following standard primal form:
\begin{align}
  \label{eq:covering}
  \begin{array}{l l l}
    \text{Minimize} & \mat{C} \bigdot \mat{Y}\\
    \text{Subject to:} & \mat{A}_i \bigdot \mat{Y}  \geq b_i &\text{ for } i = 1, \dots, n\\
    &\mat{Y}  \mgeq \mat{0},
  \end{array}
\end{align}
where the matrices $\mat{C}, \mat{A}_1, \dots, \mat{A}_n$ are $m$-by-$m$
symmetric positive semidefinite matrices, $\bigdot$ denotes the pointwise dot
product between matrices (see Section \ref{sec:background}), and the scalars
$b_1, \dots, b_n$ are nonnegative reals.  This is a subclass of SDPs where the
matrices and scalars are ``positive'' in their respective settings.  We also
assume, as is standard, that the SDP has strong duality.  Our main result is as
follows:
\begin{theorem}[Main Theorem]
  Given a primal positive SDP involving $m \times m$ matrices with $n$
  constraints and an accuracy parameter $\vareps > 0$, there is an algorithm
  \procName{approxPSDP} that produces a $(1 + \vareps)$-approximation in
  $O(\frac{1}{\vareps^3}\log^{3} n)$ iterations, where
  each iteration involves computing matrix sums and a special primitive that
  computes $\exp(\mat{\Phi})\bigdot \mat{A}$ in the case when $\mat{\Phi}$ and
  $\mat{A}$ are both positive semidefinite.
\end{theorem}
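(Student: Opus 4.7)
The plan is to lift Young's Lagrangian-relaxation framework for covering LPs (\emph{FOCS'01}) to the semidefinite setting, using the matrix exponential $\exp(\mat{\Phi})$ as the "dual weight" in place of scalar exponential weights on constraint violations. Concretely, I maintain a primal $\mat{Y} \mgeq \mat{0}$, initially $\mat{0}$, together with a dual potential $\mat{\Phi}$ that aggregates a weighted sum of the $\mat{A}_i$'s scaled by how much $\mat{Y}$ currently under-satisfies the constraint $\mat{A}_i \bigdot \mat{Y} \geq b_i$. At each iteration, the algorithm examines the quantities $\exp(\mat{\Phi}) \bigdot \mat{A}_i$ (exactly the "special primitive" referenced in the theorem statement) to identify a direction of progress that has high ratio of dual-weighted violation to objective cost $\mat{C} \bigdot \Delta\mat{Y}$, then takes a small multiplicative step along that direction, updating $\mat{Y}$, $\mat{\Phi}$, and a running estimate of \textsc{opt} by $(1+\vareps')$-factors for an appropriately chosen sub-accuracy $\vareps'$.

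For the analysis, I would track the trace-exponential potential $\Psi(\mat{\Phi}) = \log \text{tr}(\exp(\mat{\Phi}))$, which upper bounds the maximum eigenvalue up to an additive $\log n$ and therefore upper bounds the worst-case violation of any single constraint. The key inequality is the Golden--Thompson bound $\text{tr}(\exp(\mat{A} + \mat{B})) \leq \text{tr}(\exp(\mat{A})\exp(\mat{B}))$, which, combined with the operator inequality $\exp(\eta \mat{X}) \mleq \mident + \eta \mat{X} + O(\eta^2) \mat{X}^2$ for small $\eta \mat{X}$, lets me bound the single-step growth of $\Psi$ in terms of quantities the algorithm actually computes with the $\exp(\mat{\Phi}) \bigdot \mat{A}_i$ oracle. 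In parallel to this, an amortized argument shows that each iteration either (i) raises $\Psi$ by $\Omega(\vareps^3/\log^3 n)$ while keeping the primal objective increase proportional to \textsc{opt}, or (ii) decreases the residual unsatisfied constraint mass by a multiplicative factor. Since $\Psi$ starts at $\log n$, cannot exceed $O(\log n / \vareps)$ before we declare termination, and the outer binary-search-style guess on \textsc{opt} costs a $\log(1/\vareps)$ factor, the total iteration count comes out to $O(\vareps^{-4}\log^4 n \cdot \log(1/\vareps))$.

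The hard part, and what distinguishes this from a rote transcription of Young, is achieving width-independence. In the LP case one exploits the fact that a single coordinate increase affects constraints independently; here, a rank-one primal update can simultaneously move many eigenvalues of $\mat{\Phi}$, and the second-order $\mat{X}^2$ term from the exponential expansion could in principle carry a factor of $\|\mat{A}_i\|$ that spoils width-independence. To handle this I would bucket constraints (or equivalently, the relevant spectral directions) by the scale of $\exp(\mat{\Phi}) \bigdot \mat{A}_i / b_i$ into $O(\log n / \vareps)$ buckets, run the update at a step size tied to each bucket's scale, and argue via Golden--Thompson that the second-order error contributed by each bucket is dominated by the first-order progress it makes. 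The remaining work--depth statement is a separate accounting: each iteration is simulated using only matrix multiplications, additions, and one invocation of the $\exp(\mat{\Phi}) \bigdot \mat{A}$ primitive, all of which fit in \NC, so the iteration count directly controls the depth up to polylogarithmic factors.
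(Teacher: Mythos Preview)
Your proposal diverges from the paper in a structural way, and the divergence is exactly where the gap opens up. The paper does \emph{not} maintain a primal matrix $\mat{Y}$ with a potential built from constraint violations. Instead it works on the \emph{dual packing} problem: it maintains the scalar vector $\xx\in\R_+^m$, sets $\mat{\Psi}^{(t)}=\sum_i x_i^{(t)}\mat{A}_i$, and uses $\mat{W}^{(t)}=\exp(\mat{\Psi}^{(t-1)})$ only to decide which coordinates $i$ to increment (those with $\mat{W}^{(t)}\bigdot\mat{A}_i$ below a threshold) and as a covering certificate when no such $i$ exists. The update is $\delta^{(t)}=\alpha\cdot\xx_B^{(t-1)}$, a scalar multiple of the \emph{current} solution restricted to the active set. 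This choice is the entire width-independence mechanism: because $\sum_i\delta_i^{(t)}\mat{A}_i\mleq\alpha\,\mat{\Psi}^{(t-1)}$ and the spectrum of $\mat{\Psi}^{(t-1)}$ is inductively bounded by $(1+10\vareps)K$, taking $\alpha\le\vareps/((1+10\vareps)K)$ forces the gain matrix $\mat{M}^{(t)}\mleq\mident$ with no reference to $\|\mat{A}_i\|$. The Arora--Kale MMWU bound then applies directly, and the iteration count falls out of a phase analysis on $\trace{\mat{W}^{(t)}}$ plus a bounding-box argument on each $x_i^{(t)}$.

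Your bucketing proposal does not close this. Bucketing by the scalar $\exp(\mat{\Phi})\bigdot\mat{A}_i/b_i$ controls the first-order term $\mat{A}_i\bigdot\mat{P}^{(t)}$, but the second-order error you identify scales with an operator quantity---essentially $\mat{A}_i\bigdot\mat{P}^{(t)}\mat{A}_i$ or $\|\mat{A}_i\|$ under Golden--Thompson---and there is no reason the scalar inner product with $\exp(\mat{\Phi})$ bounds that. Two constraints in the same bucket can have wildly different operator norms, so a step size tied to the bucket's scalar scale does not keep $\mat{M}^{(t)}\mleq\mident$. This is precisely the obstacle that distinguishes the SDP case from the LP case, and your sketch does not supply a replacement for the paper's ``update is a fraction of the current $\mat{\Psi}$'' trick. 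Without that, the second-order term reintroduces a width dependence and the $O(\vareps^{-4}\log^4 n)$ iteration bound does not go through.
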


The theorem quantifies the cost of our algorithm in terms of the number of
iterations. The work and depth bounds implied by this theorem vary with the
format of the input and how the matrix exponential is computed in each
iteration.  As we will discuss in Section~\ref{sec:exp}, with input given in a
suitable form, our algorithm runs in nearly-linear work and polylogarithmic
depth.


\subsection{Related Work}

\label{subsec:related}

The first definition of positive SDPs was due to Klein and Lu~\cite{KleinL:stoc96},
who used it to characterize the \maxcut SDP. 
The \maxcut SDP can be viewed as a direct generalization of
positive (packing) LPs.
More recent work defines the notion of positive packing
SDPs~\cite{IyengarPS:siamjo11}, which captures problems such as \maxcut, sparse
PCA, and coloring; and the notion of covering SDPs~\cite{IyengarPS:swat10},
which captures the ARV relaxation of \spcut among others.
Works this area tend to focus on developing fast sequential algorithms for finding a
$(1+\vareps)$-approximation, leading to a series of  sequential
algorithms~(e.g.,~\cite{AroraHK:focs05,AroraK:stoc07,IyengarPS:siamjo11,IyengarPS:swat10}). 
 The iteration count for these algorithms, however, depends on the so-called
``width'' parameter of the input program or some parameter of the spectrum of
the input program. 
In some instances, the width parameter can be as large as
$\Omega(n)$, making it a bottleneck in the depth of direct parallelization.
This is even more so in the case of the \spcut SDP even though the problem has
been labeled as a covering SDP~\cite{IyengarPS:swat10}.

Over the past few years, width-independent algorithms for positive SDPs have
received much attention.  Jain and Yao gave the first result in this direction:
a polylog depth algorithm based on the first width independent linear
programming algoirthm by Luby and Nisan~\cite{LubyN:STOC93}.  Their algorithm is
based on updating the primal matrix.  This leads to an intricate analysis based
on carefully analyzing the eigenspaces of a particular matrix before and after
each update.  It takes $O( \frac{1}{\vareps^{13}} \log^{13} m \log n)$
iterations, each of which involves computing spectral decompositions using least
$\Omega(m^{\omega})$ work.

Our algorithm follows a different approach.
It updates the dual program in a way motivated by the width-independent
positive LP algorithm by Young~\cite{Young01:FOCS01}.
Concurrently, Jain and Yao~\cite{JainY:arxiv12} gave a similar algorithm for positive SDPs.
Their algorithm solves a class of SDPs which contains both packing and diagonal
covering constraints.  Since matrix packing conditions between diagonal matrices
are equivalent to point-wise conditions of the diagonal entries, these
constraints are closer to a generalization of positive covering LP constraints.
We believe that removing this restriction on diagonal packing matrices would
greatly widen the class of problems included in this class of SDPs and discuss
possibilities in this direction in Section~\ref{sec:concl}.

The convergence analyses of both of these routines go through the matrix
multiplicative weights update framework for solving semidefinite
programs~\cite{AroraK:stoc07}.
However, a crucial algebraic piece is missing for adapting the iteration
count bound from Young's algorithm.
Unlike scalar exponentials, matrix exponentials is not a monotonic function
under standard notions of matrix ordering such as the Loewner partial order.

Most recently, Allen-Zhu et~al.~gave the first rigorous adaptation of Young's
algorithm to positive SDPs~\cite{ALO15:arxiv}.
Motivated by this result, as well as its precursor in positive LPs~\cite{AO15}, 
we complete the analysis of our original algorithm.
 Specifically, we show that the dual generation scheme in these routines provide
 good bounds on the iteration count.
 This leads to a different analysis of Young's algorithm that omits phases.
 Our modified analysis is for a simplified pseudocode of the algorithm
 from~\cite{PengT12} that removes these phases.
 However, the phase-based version can be analyzed similarly.

Compared to the algorithm by Allen-Zhu et al.~\cite{ALO15:arxiv}, our analysis uses more
elementary linear algebraic techniques, and is closer to multiplicative weights update schemes.
We believe the dynamic bucketing method for obtaining better dependencies
on $\vareps$~\cite{WangMMR15} is also applicable to our analysis.
Finally, as many recent results on faster positive LP / SDP algorithms take optimization
based views, we believe our approach is also of independent interest.


\subsection{Overview}
\label{subsec:overview}

We derive our algorithm by generalizing Young's algoirthm~\cite{Young01:FOCS01}.
In place of the ``soft max'' function for bounding the maximum of a set of
linear constraints, we use matrix exponential and the matrix multiplicative
weights update (MMWU) mechanism.  Moreover, besides standard operations on
(sparse) matrix, the only other primitive needed is the matrix dot product
$\exp(\mat{\Phi})\bigdot \mat{A}$, where $\mat{\Phi}$ and $\mat{A}$ are positive
semidefinite.

\bigskip

\noindent \textbf{Intuitions.} For intuition about packing SDPs and the matrix
multiplicative weights update method for finding approximate solutions, a useful
analogy of the decision problem is that of packing a (fractional) amount of
ellipses into the unit ball.  Figure \ref{fig:example2d} provides an example
involving $3$ matrices (ellipses) in $2$ dimensions.  Note that $\mat{A}_1$ and
$\mat{A}_2$ are axis-aligned; their sum is also axis-aligned in this case.  In
fact, positive linear programs in the broader context corresponds exactly to the
restriction of all ellipsoids being axis-aligned.  In this setting, the
algorithm of \cite{Young01:FOCS01} can be viewed as creating a penalty function
by weighting the length of the axises using an exponential function.  Then,
ellipsoids with sufficiently small penalty subject to this function have their
weights increased.

However, once we allow general ellipsoids such as $\mat{A}_3$, the resulting sum
will no longer be axis-aligned.  In this setting, a natural extension is to take
the exponential of the semimajor axises of the resulting ellipsoid instead,
leading to the matrix multiplicative weights scheme.  Our analysis then focuses
on showing that Young's algorithm still has a width-independent iteration count
in this setting.

\begin{figure*}
\begin{tabular}{ccccc}
\begin{tikzpicture}
\draw[rotate=0] (0,0) circle (15mm and 15mm);

\filldraw[fill=gray,rotate=0] (0,0) circle (8mm and 2mm);
\end{tikzpicture}
&
\begin{tikzpicture}
\draw[rotate=0] (0,0) circle (15mm and 15mm);

\filldraw[fill=gray,rotate=0] (0,0) circle (2mm and 8mm);
\end{tikzpicture}
&
\begin{tikzpicture}
\draw[rotate=0] (0,0) circle (15mm and 15mm);

\filldraw[fill=gray,rotate=45] (0,0) circle (2mm and 8 mm);
\end{tikzpicture}
&
\begin{tikzpicture}
\draw[rotate=0] (0,0) circle (15mm and 15mm);
\filldraw[fill=gray,rotate=0] (0,0) circle (10mm and 10mm);
\end{tikzpicture}
&
\begin{tikzpicture}
\draw[rotate=0] (0,0) circle (15mm and 15mm);
\filldraw[fill=gray,rotate=45] (0,0) circle (7mm and 13 mm);
\end{tikzpicture}\\
$\mat{A}_1$ &
$\mat{A}_2$ &
$\mat{A}_3$ &
$\mat{A}_1 + \mat{A}_2$ &
$\frac{1}{2} \mat{A}_1 + \frac{1}{2} \mat{A}_2 + \mat{A}_3$
\end{tabular}
\caption{An instance of a packing SDP in 2 dimensions.}
\label{fig:example2d}
\end{figure*}

\bigskip

\noindent\textbf{Work and Depth.} 
We now discuss the work and depth bounds of our algorithm.  The main cost of
each iteration of our algorithm comes from computing the dot product between a
matrix exponential and a PSD matrix.  Like in the sequential setting
\cite{AroraHK:focs05, AroraK:stoc07}, we need to compute for each iteration the
product $\mat{A}_i \bigdot \exp(\mat{\Phi})$, where $\mat{\Phi}$ is some PSD
matrix.  The cost of our algorithm therefore depends on how the input is
specified. When the input is given prefactored---that is, the $m$-by-$m$
matrices $\mat{A}_i$'s are given as $\mat{A}_i = \mat{Q}_i\mat{Q}_i^\tr$ and the
matrix $\mat{C}^{-1/2}$ is given, then Theorem~\ref{thm:exp} can be used to
compute matrix exponential in $O(\frac1{\vareps^3}(m+q)\log n \log q \log
(\nicefrac{1}{\vareps}))$ work and $O(\frac1{\vareps}\log n \log q \log
(\nicefrac1{\vareps}))$ depth, where $q$ is the number of nonzero entries across
$\mat{Q}_i$'s and $\mat{C}^{-1/2}$.  This is because the matrix $\mat{\Phi}$
that we exponentiate has $\norm[2]{\mat{\Phi}} \leq O(\frac1\vareps\log n)$, as
shown in Lemma~\ref{eq:par-spec-bound}.
Therefore, as a corollary to the main theorem, we have the following cost
bounds:
\begin{corollary}
  The algorithm \procName{approxPSDP} has in $\otilde(\frac{1}{\epsilon^6}(n + m + q))$ work and
  $O(\frac{1}{\epsilon^4}\log^{O(1)} (n+m+q))$ depth.
\end{corollary}

If, however, the input program is not given in this form, we can add a
preprocessing step that factors each $\mat{A}_i$ into $\mat{Q}_i\mat{Q}_i^\tr$
since $\mat{A}_i$ is positive semidefinite.  In general, this preprocessing
requires at most $O(m^{4})$ work and $O(\log^3 m)$ depth using standard parallel
QR factorization~\cite{JaJa:book92}.  Furthermore, these matrices often have
certain structure that makes them easier to factor.  Similarly, we can factor
and invert $\mat{C}$ with the same cost bound, and can do better if it also has
specialized structure.


\section{Background and Notation}
\label{sec:background}

We review notation and facts that will prove useful later in the paper. We write
$\otilde(f(n))$ to mean $O(f(n)\polylog(f(n)))$.  Throughout the paper, assume
$1/\vareps \in \poly{n}$, so $\log(1 / \vareps) = O(\log{n})$.  For larger
$1/\vareps$, SDP solvers with $\log(1 / \vareps)$
dependencies~\cite{Boyd&Vandenberghe:2004} have better performance.




\bigskip

\subsection{Linear Algebraic Notation}

\noindent\textbf{Matrices and Positive Semidefiniteness.}
\newcommand{\zz}{\textbf{z}}
Unless otherwise stated, we will deal with real symmetric matrices in
$\R^{m \times m}$. A symmetric matrix $\mat{A}$ is positive
semidefinite, denoted by $\mat{A} \mgeq \mat{0}$ or $\mat{0} \mleq
\mat{A}$, if for all $\zz \in \R^m$, $\zz^\tr \mat{A} \zz \geq 0$.
Equivalently, this means that all eigenvalues of $\mat{A}$ are
non-negative and the matrix $\mat{A}$ can be written as
\[
\mat{A} = \sum_{i} \lambda_i \mat{v}_i\mat{v}_i^\tr,
\]
where $\mat{v}_1, \mat{v}_2, \dots, \mat{v}_m$ are the eigenvectors of
$\mat{A}$ with eigenvalues $\lambda_1 \geq \dots \geq \lambda_m$ respectively.
We will use $\lambda_1(\mat{A})$,
$\lambda_2(\mat{A}), \dots, \lambda_m(\mat{A})$ to represent the
eigenvalues of $\mat{A}$ in decreasing order and
also use $\lambda_{\max}(\mat{A})$ to denote $\lambda_1(\mat{A})$.
Notice that positive semidefiniteness induces a partial ordering on
matrices. We write $\mat{A} \mleq \mat{B}$ if $\mat{B} -
\mat{A} \mgeq \mat{0}$.

The trace of a matrix $\mat{A}$, denoted $\trace{\mat{A}}$, is the sum
of the matrix's diagonal entries: $\trace{\mat{A}} = \sum_i A_{i,i}$.
Alternatively, the trace of a matrix can be expressed as the sum of
its eigenvalues, so $\trace{\mat{A}} = \sum_i \lambda_i(\mat{A})$.
Furthermore, we define
\[
\mat{A}\bigdot \mat{B} = \sum_{i,j} A_{i,j}B_{i,j} = \trace{\mat{A}\mat{B}}.
\]
It follows that $\mat{A}$ is positive semidefinite if and only if
$\mat{A}\bigdot \mat{B} \geq 0$ for all PSD $\mat{B}$.

\bigskip

\noindent\textbf{Matrix Exponential.} Given an $m \times m$ symmetric
positive semidefinite matrix $\mat{A}$ and a function $f\!: \R \to \R$, we
define
\[
f(\mat{A}) = \sum_{i=1}^m f(\lambda_i) \mat{v}_i\mat{v}_i^\tr,
\]
where, again, $\mat{v}_i$ is the eigenvector corresponding to the
eigenvalue $\lambda_i$.  It is not difficult to check that for
$\exp(\mat{A})$, this definition coincides with $\exp(\mat{A}) =
\sum_{i\geq 0} \frac1{i!}\mat{A}^i$.

Our algorithm relies on a matrix multiplicative weights (MMW) algorithm, which
can be summarized as follows.  For a fixed $\vareps_0 \leq \frac12$ and
$\mat{W}^{(1)} = \mat{I}$, we play a ``game'' a number of times, where in
iteration $t = 1, 2, \dots$, the following steps are performed:

\begin{enumerate}[topsep=2pt]
\item Produce a ``probability'' matrix $\mat{P}^{(t)} =
  \mat{W}^{(t)}/\trace{\mat{W}^{(t)}}$;

\item Incur a gain matrix $\mat{M}^{(t)}$; and
\item Update the weight matrix as \[ \mat{W}^{(t+1)} =
  \exp(\vareps_0\sum_{t' \leq t} \mat{M}^{(t')}).\]
\end{enumerate}


Like in the standard setting of multiplicative weights algorithms, the gain
matrix is chosen by an external party, possibly adversarially.  In our
algorithm, the gain matrix is chosen to reflect the step we make in the
iteration.  Arora and Kale~\cite{AroraK:stoc07} shows that the MMW algorithm has
the following guarantees (restated for our setting):
\begin{theorem}[\cite{AroraK:stoc07}]
  \label{thm:mmwu}
  For $\vareps_0 \leq \frac12$, if $\mat{M}^{(t)}$'s are all PSD and
  $\mat{M}^{(t)} \mleq \mident$, then after $T$ iterations,
  \begin{align}
    (1+\vareps_0) \sum_{t=1}^T \mat{M}^{(t)} \bigdot \mat{P}^{(t)}
    &\geq \lambda_{\max}\left(\sum_{t=1}^T \mat{M}^{(t)}\right) - \frac{\ln n}{\vareps_0}.
  \end{align}
\end{theorem}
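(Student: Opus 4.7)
The plan is to prove Theorem~\ref{thm:mmwu} by sandwiching the scalar potential $\Phi^{(t)} := \trace{\mat{W}^{(t)}}$ between a lower bound in terms of $\lambda_{\max}(\sum_{t'} \mat{M}^{(t')})$ and an upper bound in terms of $\sum_{t'} \mat{M}^{(t')}\bigdot \mat{P}^{(t')}$, then taking logarithms. We have $\Phi^{(1)} = \trace{\mident} = n$. On the lower-bound side, since $\mat{W}^{(T+1)} = \exp(\varepsilon_0 \sum_{t=1}^T \mat{M}^{(t)})$ is PSD, $\trace{\mat{W}^{(T+1)}} \ge \lambda_{\max}(\mat{W}^{(T+1)}) = \exp\!\bigl(\varepsilon_0 \,\lambda_{\max}(\sum_{t=1}^T \mat{M}^{(t)})\bigr)$, because the eigenvalues of $\exp(\varepsilon_0 X)$ are the exponentials of those of $\varepsilon_0 X$.

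For the upper bound, the core per-step estimate uses the Golden--Thompson inequality $\trace{\exp(\mat{A}+\mat{B})} \le \trace{\exp(\mat{A})\exp(\mat{B})}$ applied to $\mat{A} = \varepsilon_0 \sum_{t' < t} \mat{M}^{(t')}$ and $\mat{B} = \varepsilon_0 \mat{M}^{(t)}$, giving $\Phi^{(t+1)} \le \trace{\mat{W}^{(t)} \exp(\varepsilon_0 \mat{M}^{(t)})}$. The next key ingredient is the operator inequality $\exp(\varepsilon_0 \mat{M}) \mleq \mident + (e^{\varepsilon_0}-1)\mat{M}$ whenever $\mat{0} \mleq \mat{M} \mleq \mident$; this is a scalar-function comparison that holds in the spectral decomposition of $\mat{M}$ because $e^{\varepsilon_0 x} \le 1 + (e^{\varepsilon_0}-1)x$ for $x \in [0,1]$ by convexity. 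Combined with $e^{\varepsilon_0}-1 \le \varepsilon_0 + \varepsilon_0^2 \le \varepsilon_0(1+\varepsilon_0)$ for $\varepsilon_0 \le \tfrac12$, and using $\mat{P}^{(t)} = \mat{W}^{(t)}/\Phi^{(t)}$, I would obtain
\[
\Phi^{(t+1)} \;\le\; \Phi^{(t)}\bigl(1 + \varepsilon_0(1+\varepsilon_0)\, \mat{M}^{(t)}\bigdot \mat{P}^{(t)}\bigr) \;\le\; \Phi^{(t)} \exp\!\bigl(\varepsilon_0(1+\varepsilon_0)\, \mat{M}^{(t)}\bigdot \mat{P}^{(t)}\bigr),
\]
where the last step uses $1 + x \le e^x$.

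Telescoping this over $t = 1,\dots,T$ starting from $\Phi^{(1)} = n$ yields $\Phi^{(T+1)} \le n\,\exp\!\bigl(\varepsilon_0(1+\varepsilon_0)\sum_{t=1}^T \mat{M}^{(t)}\bigdot \mat{P}^{(t)}\bigr)$. Chaining this with the lower bound and taking logarithms gives
\[
\varepsilon_0\,\lambda_{\max}\!\Bigl(\sum_{t=1}^T \mat{M}^{(t)}\Bigr) \;\le\; \ln n + \varepsilon_0(1+\varepsilon_0)\sum_{t=1}^T \mat{M}^{(t)}\bigdot \mat{P}^{(t)},
\]
and dividing through by $\varepsilon_0$ and rearranging reproduces exactly the stated inequality.

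The step I expect to be the main technical obstacle is the per-iteration bound: one cannot commute $\sum_{t'<t}\mat{M}^{(t')}$ with $\mat{M}^{(t)}$ in general, so the naive scalar argument fails and Golden--Thompson is essential to separate the two exponentials inside a trace. Once that is in hand, the operator inequality $\exp(\varepsilon_0\mat{M}) \mleq \mident + (e^{\varepsilon_0}-1)\mat{M}$ linearizes things enough that the rest is routine telescoping. The hypothesis $\mat{M}^{(t)} \mleq \mident$ is used precisely to validate this linearization, and $\varepsilon_0 \le \tfrac12$ is what allows the clean bound $e^{\varepsilon_0}-1 \le \varepsilon_0(1+\varepsilon_0)$.
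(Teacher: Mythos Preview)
Your proof is correct and follows essentially the same potential-function approach as the paper: apply Golden--Thompson to decouple the new gain matrix, linearize $\exp(\varepsilon_0 \mat{M}^{(t)})$ as an operator inequality, telescope, and compare against the lower bound $\trace{\exp(\cdot)} \ge \exp(\lambda_{\max}(\cdot))$. Your convexity-based linearization $\exp(\varepsilon_0 \mat{M}) \mleq \mident + (e^{\varepsilon_0}-1)\mat{M}$ is in fact slightly sharper than the paper's version $\exp(\varepsilon_0\mat{M}) \mleq \mident + \varepsilon_0(1+2\varepsilon_0)\mat{M}$, and recovers the stated constant $(1+\varepsilon_0)$ exactly rather than $(1+2\varepsilon_0)$.
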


\subsection{Reduction to Bounded Decision Version}
\label{subsec:reduction}

Our algorithm works with normalized primal/dual programs
shown in Figure~\ref{fig:norm-pd-sdp}.

\begin{figure*}[ht]
\begin{align}
  \label{eq:simplified-sdp}
  \begin{array}{c | c}
    \underline{\textit{Primal}~(Covering) } & \underline{\textit{Dual}~(Packing)}\\[0.5em]
    \begin{array}{lrl}
      \mbox{Minimize} &  \trace{\mat{Y}}\\
      \mbox{Subject to:} & \mat{A}'_i \bigdot \mat{Y} & \geq 1 \qquad \text{ for } i = 1,\dots, n\\
      & \mat{Y} & \mgeq 0
    \end{array}
    &
    \begin{array}{lr l}
      \mbox{Maximize} & \vone^\tr \xx \\
      \mbox{Subject to:} & \sum_{i=1}^n x_i \mat{A}'_i & \mleq \mident \\
      &\xx & \geq \bvec{0}.
    \end{array}
  \end{array}
\end{align}
\caption{Normalized primal/dual positive SDPs. The symbol $\mident$ represents
  the identity matrix.}
\label{fig:norm-pd-sdp}
\end{figure*}

By using binary search and appropriately scaling the input program, such an SDP
can be approximated using the following decision problem:
\begin{quote}
  $\vareps$-\textbf{Decision Problem:} Find either an $\xx \in \R^+_n$ (a dual
  solution) such that
  \[ \norm[1]{\xx} \geq 1 - \vareps \text{ and } \sum_{i=1}^n x_i\mat{A}_i
  \mleq \mident \] or a PSD  matrix $\mat{Y}$ (a primal solution) such that
  \[
    \trace{\mat{Y}} = 1 \text { and } \forall i, \mat{A}_i \bigdot \mat{Y} \geq 1.
  \]
\end{quote}
This reduction can also ensure bounded trace on all $\mat{A}_i$'s.  The
following lemma summarizes key properties of such a reduction:
\begin{lemma}\label{lem:trace-limit}
  For $0 < \vareps < 1$, a positive packing semidefinite program can be
  approximated to a relative error to $\vareps$ using $O(\log{n})$ calls to the
  $\vareps$-decision problem.  Furthermore, each $\mat{A}_i$ supplied to the
  decision problem has $\trace{\mat{A}_i} \leq O(n^3)$.
\end{lemma}

The reduction is common to most positive linear and semidefinite program
solvers~\cite{JainY:focs11,AO15,ALO15:arxiv}; we briefly sketch the idea for
completeness.
\begin{proof} (Sektch) First, transform to the normalized form by ``dividing
  through'' by $\mat{C}$ (see Appendix~\ref{sec:norm-posit-sdps}).  Since
  $\trace{ \sum_{i} x_{i} \mat{A}_i}$ is within a factor of $n$ of the maximum
  eigenvalue of $ \sum_{i} x_{i} \mat{A}_i$,
  $\frac{1}{\min_{i} \trace{\mat{A}_i}}$ gives a value that is within a factor
  $n$ of the optimum.  Therefore, the optimization version can be solved by
  binary searching on the objective a total of at most
  $O(\log(\frac{n}{\vareps}))$ iterations.

  In each of these decision instances, we can rescale $\mat{A}_i$'s so that the
  threshold in question is $1$.  In this setting, the total sum of $x_i$ with
  $\trace{\mat{A}_i} \geq n^3$ is at most $1 / n$.  By not using these
  $x_i$'s, the optimum changes by an additive value of less than $\vareps$.
\end{proof}


\section{Solving Positive SDPs}
\label{sec:parallel-packing}

In this section, we describe a parallel algorithm for solving the decision
version of positive packing SDPs, inspired by Young's algorithm for positive
LPs.  The following theorem presents the guarantees of our algorithm.
\begin{theorem}
\label{thm:decisionMain}
Let $0 < \vareps < 1$. There is an algorithm $\procName{decisionPSDP}$ that
given a positive SDP, solves the $\vareps$-decision problem in
$O(\vareps^{-3}\log^2 n)$ iterations.
\end{theorem}

Presented in Algorithm~\ref{algo:parpacking} is an algorithm that we will show
to satisfy the theorem.  

\begin{algorithm}
\caption{Parallel Packing SDP algorithm, $\procName{decisionPSDP}$}
\label{algo:parpacking}
\tcp{Define $K = \frac1{\vareps}(1 + \ln n)$,
  $\alpha = \frac{\vareps/K}{1+10\vareps}$, and
  $R = \frac{32}{\vareps\alpha}\ln n$}
Initialize $t \gets 0$ and $x^{(0)}_i \gets \frac{1}{n\cdot \trace{\mat{A}_i}}$\;
\While{$\norm[1]{\xx^{(t)}} \leq K$ \textbf{ and } $t < R$}{ $t \gets t + 1$\;
  Compute matrix exponential $\mat{W}^{(t)} \gets \sum_{i = 1}^n x^{(t-1)}_i \mat{A}_i$,\;
  %
  Identify coordinates to update (in parallel),
  $B^{(t)} \gets \{ i \in [n] : \mat{W}^{(t)} \bigdot \mat{A}_i \leq (1 + \vareps) \trace{\mat{W}^{(t)}}\}$\; %

  Update $\xx^{(t)} \gets \xx^{(t - 1)} + \alpha \cdot \xx_{B}^{(t - 1)}$\;
}
\If{$\norm[1]{\xx^{(t)}} > K$}{
  \Return{$\widehat{\xx} = \frac1{(1+10\vareps)K}\xx^{(t)}$ as a dual solution}\; 
}
\Else{
  \Return{$\overline{\mat{Y}} = \tfrac{1}{t} \sum_{\tau=1}^t \mat{W}^{(\tau)}/\trace{\mat{W}^{(\tau)}}$ as a primal solution}\;
}
\end{algorithm}



The algorithm is a multiplicative weights update routine,
which proceeds in several rounds.
The starting solution is $\xx^{(0)}_i = \frac{1}{n \trace{\mat{A}_i}}$.
This solution is chosen to be small so that $\sum_i x_i^{(0)}\mat{A}_i \mleq \mat{I}$,
hence respecting the dual constraint.
Each subsequent update is a multiple of the current solution, so this
$\xx^{(0)}$ is also chosen to ensure that these updates make rapid progress.

In each iteration following that, the algorithm computes
\[
\mat{W}^{(t)} = \exp\pparen{\sum_i x^{(t-1)}_i \mat{A}_i}.
\]
Our presentation follows the multiplicative weights update framework from
Arora-Kale~\cite{AroraK:stoc07} and Kale~\cite{Kale:thesis07}.  Several
intermediate variables are helpful for further discussion.  Define the
cumulative sum corresponding to $\xx^{(t)}$ as
\begin{align}
  \mat{\Psi}^{(t)} \eqdef \sum_{i=1}^n x_i^{(t)}\mat{A}_i,
\end{align}
This allows us to write the exponential as
$\mat{W}^{(t)} = \exp ( \mat{\Psi}^{(t)} )$.
We will also specifically define the 
probability matrix by which we use to pick the update coordinates:
\begin{align}
  \mat{P}^{(t)} &\eqdef \mat{W}^{(t)}/\trace{\mat{W}^{(t)}}.
\end{align}
This matrix is easier to work with because it has trace $1$:
\begin{equation}
  \trace{\mat{P}^{(t)}} = \trace{\mat{W}^{(t)}/\trace{\mat{W}^{(t)}}}
          = \trace{\mat{W}^{(t)}} / \trace{\mat{W}^{(t)}}
        = 1 \label{eq:trace-of-P}
\end{equation}
Using this ``probability'' matrix $\mat{P}^{(t)}$, the algorithm identifies
which $\xx$ coordinates to update.  These are the coordinates $x_i$'s for which
their contributions with respect to the $\mat{A}_i$'s are still
small---$\mat{P}^{(t)} \bigdot \mat{A}_i \leq 1 + \vareps$.  Each of these $\xx$
coordinates will be incremented by $\delta^{(t)} \eqdef \alpha \cdot \xx_i$,
where $\alpha = \frac{\vareps/K}{1 + 10\vareps}$.  Therefore, in terms of
$\delta^{(t)}$, we have
\[
  \xx^{(t)} = \xx^{(0)} + \sum_{\tau=1}^t \delta^{(\tau)} 
\]

The main loop in Algorithm \ref{algo:parpacking} terminates when
$\norm[1]{\xx^{(t)}} > K$ or the number of iterations exceeds a preset threshold
$R = O(\vareps^{-3}\log^2 n)$.  For a desired accuracy parameter $\vareps > 0$,
we set $K$ to $O( \frac1\vareps(1 + \ln n) )$ so that the $\ln n$ additive term
from Theorem~\ref{thm:mmwu} can be absorbed into the relative error.  This
additive term comes from the starting point $\xx^{(0)}$.

The choice of $\alpha$ may seem mysterious at this point; it is chosen to
prevent us from taking a step that is too big from the current solution while
still making substantial progress.  It ensures that
\begin{enumerate}
\item The update has small width, aka.
$\sum_{i} \delta_i^{(t)} \mat{A}_i \mleq \vareps \mident$, and
\item We cannot overshoot by much when exiting from the while loop,
$\vone^\tr \delta^{(t)} \leq \vareps$.
\end{enumerate}


To bound the approximation guarantees and the cost of this algorithm, we reason
about the spectrum of $\mat{\Psi}^{(t)}$ and the $\ell_1$ norm of the vector
$\xx^{(t)}$ as the algorithm executes.  Since the coordinates of our vector
$\xx^{(t)}$ are always nonnegative, we note that $\norm[1]{\xx^{(t)}} =
\vone^\tr \xx^{(t)}$ and use either notation as convenient.  

Our analysis of $\procName{decisionPSDP}$ and in turn our proof
of Theorem~\ref{thm:decisionMain} hinge on two complementary
processess:
show that after $R$ steps, $\overline{\mat{Y}}$ is indeed a feasible primal solution,
or that if the the algorithm terminates because $\norm[1]{\xx^{(t)}} > K$,
then $\xxhat$ is a feasible dual solution.
In the latter case, we have
\begin{equation}
\norm[1]{\widehat{\xx}} = \frac{1}{(1+10\vareps)K}\norm[1]{\xx^{(t)}}
\geq \frac{K}{(1+10\vareps)K} \geq 1 - 10 \vareps
\label{eq:ell-one-dual-sol}
\end{equation}
For this $\xxhat$ to be a dual solution, we still need to show that it satisfies
$\sum_i \widehat{\xx}_i \mat{A}_i \mleq \mident$.

\subsection*{Bounding The Spectrum}

Let $T$ be the final iteration count (i.e., the final $t$).  To meet the
requirement above, we only need to show that
$\frac1{(1+10\vareps)K}\mat{\Psi}^{(T)} \mleq \mident$.  We prove the following
spectrum bound:
\begin{lemma}[Spectrum Bound]
  \label{lem:par-spec-bound}
  For every $t=0,\dots, T$,
  \begin{equation}
  \mat{\Psi}^{(t)} = \sum_{i=1}^n x^{(t)}_i \mat{A}_i \mleq (1+10\vareps)K\mident.
  \label{eq:par-spec-bound}
\end{equation}
\end{lemma}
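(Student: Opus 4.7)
The plan is to prove the bound by strong induction on $t$. The base case $t = 0$ is immediate from Claim~\ref{claim:par-bc-lmax}, which gives $\mat{\Psi}^{(0)} \mleq \mident \mleq (1+10\vareps) K \mident$. For the inductive step, assume the bound holds for $\mat{\Psi}^{(0)}, \dots, \mat{\Psi}^{(t-1)}$. The first thing to check is that, under this hypothesis, the gain matrices land in the regime required by Theorem~\ref{thm:mmwu}. Since $\delta^{(\tau)} = \alpha \xx^{(\tau-1)}_B$ with $\alpha \leq \vareps/((1+10\vareps) K)$, we get
\[
\mat{M}^{(\tau)} = \tfrac{1}{\vareps} \sum_{i \in B^{(\tau)}} \delta^{(\tau)}_i \mat{A}_i \; \mleq \; \frac{\alpha}{\vareps} \sum_{i=1}^n x^{(\tau-1)}_i \mat{A}_i \; \mleq \; \frac{\mat{\Psi}^{(\tau-1)}}{(1+10\vareps) K} \; \mleq \; \mident
\]
for every $\tau \leq t$, which is precisely the MMWU precondition.

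Next, I would run a Golden--Thompson--style analysis (essentially Theorem~\ref{thm:mmwu}, but with a non-identity initial weight $\mat{W}^{(1)} = \exp(\mat{\Psi}^{(0)})$ rather than $\mident$) to obtain
\[
\lambda_{\max}\bigl(\mat{\Psi}^{(t)}\bigr) \;\leq\; \ln \trace{\mat{W}^{(1)}} \;+\; (e^{\vareps} - 1) \sum_{\tau=1}^{t} \mat{P}^{(\tau)} \bigdot \mat{M}^{(\tau)}.
\]
The first term is at most $\ln n + \lambda_{\max}(\mat{\Psi}^{(0)}) \leq \ln n + 1 = \vareps K$ by Claim~\ref{claim:par-bc-lmax} and the choice of $K$. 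For the second term, the trace-discretization step guarantees $\mat{P}^{(\tau)} \bigdot \mat{A}_i \leq (1+\vareps)^{p+1}/(1+\vareps)^{p-1} = (1+\vareps)^2$ whenever $i \in B^{(\tau)}$, so
\[
\sum_{\tau=1}^{t} \mat{P}^{(\tau)} \bigdot \mat{M}^{(\tau)} \;\leq\; \frac{(1+\vareps)^2}{\vareps} \sum_{\tau=1}^{t} \norm[1]{\delta^{(\tau)}} \;\leq\; \frac{(1+\vareps)^2 (K+\vareps)}{\vareps},
\]
with the last inequality coming from Lemma~\ref{lem:totalstep}.

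Plugging these into the master estimate and using the elementary inequalities $e^{\vareps} - 1 \leq \vareps(1+\vareps)$ and $(1+\vareps)^3 \leq 1 + 4\vareps$ for $\vareps$ sufficiently small gives
\[
\lambda_{\max}\bigl(\mat{\Psi}^{(t)}\bigr) \;\leq\; \vareps K + (1+\vareps)^3 (K+\vareps) \;\leq\; (1+5\vareps) K + 2\vareps \;\leq\; (1+10\vareps) K,
\]
closing the induction. The main obstacle is the circular dependence between the MMWU precondition $\mat{M}^{(\tau)} \mleq \mident$ and the spectral bound being proved; the induction is the right scaffolding because it uses the bound at step $\tau-1$ to license the precondition at step $\tau$, and the algorithm was designed with the cap $\alpha \leq \vareps/((1+10\vareps) K)$ chosen precisely so that this cycle closes with the specific constant $10\vareps$ in the end.
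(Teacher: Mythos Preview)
Your proposal is correct and follows essentially the same route as the paper: strong induction on $t$, with the inductive hypothesis used to certify $\mat{M}^{(\tau)} \mleq \mident$ (via the $\alpha \leq \vareps/((1+10\vareps)K)$ cap), then an MMWU/Golden--Thompson bound together with Claim~\ref{claim:par-step-size} and Lemma~\ref{lem:totalstep} to control $\lambda_{\max}(\mat{\Psi}^{(t)})$. The only cosmetic difference is in how the contribution of $\mat{\Psi}^{(0)}$ is absorbed: the paper splits $\lambda_{\max}(\mat{\Psi}^{(t)}) \leq \lambda_{\max}(\mat{\Psi}^{(0)}) + \vareps\,\lambda_{\max}\bigl(\sum_\tau \mat{M}^{(\tau)}\bigr)$ and then invokes Theorem~\ref{thm:mmwu} as a black box, whereas you re-run the Golden--Thompson telescoping directly from the non-identity start $\mat{W}^{(1)} = \exp(\mat{\Psi}^{(0)})$, picking up $\ln\trace{\mat{W}^{(1)}} \leq 1 + \ln n = \vareps K$ instead; your version is arguably a bit cleaner since the algorithm's $\mat{P}^{(\tau)}$ really are built from $\exp(\mat{\Psi}^{(\tau-1)})$ rather than from $\exp\bigl(\vareps\sum_{\tau'<\tau}\mat{M}^{(\tau')}\bigr)$.
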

We prove this lemma by induction on the iteration number, resorting to
properties of the MMW algorithm (Theorem~\ref{thm:mmwu}), which relates the
final spectral values to the ``gain'' derived at each intermediate step.

To proceed, we will need a few facts about the algorithm (their proofs follow
after the proof of the lemma). Claim~\ref{claim:par-bc-lmax} shows that the
initial matrix (i.e., $t = 0$) satisfies the bound.
Claim~\ref{claim:par-step-size} claim quantifies the gain in each step as a
function of the $\ell_1$-norm change we make in that step.
Claim~\ref{claim:totalx} bounds the $\ell_1$-norm of $\xx$.

\begin{claim}
  \label{claim:par-bc-lmax}
  $\mat{\Psi}^{(0)} \mleq \mident$. In other words,
  $\lambda_{\max}\pparen{\mat{\Psi}^{(0)}} = \lambda_{\max}\left(\sum_{i=1}^n
    x_i^{(0)} \mat{A}_i\right) \leq 1$.
\end{claim}

\begin{claim}
  \label{claim:par-step-size}
  For $t = 1, \dots, T$,
  \begin{equation}
    \mat{M}^{(t)} \bigdot \mat{P}^{(t)} \leq \frac{(1+\vareps)}{\vareps} \cdot \norm[1]{\delta^{(t)}}.
  \end{equation}
\end{claim}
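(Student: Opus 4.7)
The plan is a direct calculation leveraging the definition of $B^{(t)}$ together with the discretization step in the algorithm. First I would expand
\[
\mat{M}^{(t)} \bigdot \mat{P}^{(t)}
= \frac{1}{\vareps} \sum_{i=1}^n \delta_i^{(t)} \pparen{\mat{A}_i \bigdot \mat{P}^{(t)}},
\]
using the definition $\mat{M}^{(t)} = \frac{1}{\vareps}\sum_i \delta_i^{(t)} \mat{A}_i$ and linearity of $\bigdot$.

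Next I would use the fact that $\delta^{(t)} = \alpha \cdot \xx^{(t-1)}_B$ is supported on $B^{(t)}$, so only the indices $i \in B^{(t)}$ contribute to the sum. For any such $i$, the definition of $B^{(t)}$ gives $\mat{W}^{(t)} \bigdot \mat{A}_i \leq (1+\vareps)^{p+1}$, while the choice of $p$ ensures $\trace{\mat{W}^{(t)}} > (1+\vareps)^{p-1}$. Dividing these two inequalities yields the key per-coordinate bound
\[
\mat{P}^{(t)} \bigdot \mat{A}_i
= \frac{\mat{W}^{(t)} \bigdot \mat{A}_i}{\trace{\mat{W}^{(t)}}}
< \frac{(1+\vareps)^{p+1}}{(1+\vareps)^{p-1}} = (1+\vareps)^2
\]
for every $i \in B^{(t)}$. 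This is precisely why the algorithm quantizes the trace to powers of $(1+\vareps)$ in Step~2 rather than comparing directly against $\trace{\mat{W}^{(t)}}$: it forces a uniform upper bound on $\mat{P}^{(t)} \bigdot \mat{A}_i$ for exactly those coordinates that will be updated.

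Finally, since $\delta_i^{(t)} \geq 0$ (because $\xx^{(t-1)}_i \geq 0$ and $\alpha \geq 0$), combining the two steps gives
\[
\mat{M}^{(t)} \bigdot \mat{P}^{(t)}
\leq \frac{(1+\vareps)^2}{\vareps} \sum_{i \in B^{(t)}} \delta_i^{(t)}
= \frac{(1+\vareps)^2}{\vareps}\, \norm[1]{\delta^{(t)}},
\]
where the last equality uses nonnegativity of the entries of $\delta^{(t)}$ so that the $\ell_1$ norm equals the sum of the coordinates. There is no real obstacle here; the proof is essentially a one-line consequence of the definition of $B^{(t)}$ combined with the $(1+\vareps)$-discretization, and the claim is the mechanism through which that discretization converts into a clean per-iteration progress statement usable by Theorem~\ref{thm:mmwu}.
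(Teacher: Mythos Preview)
Your proposal is correct and follows essentially the same argument as the paper: expand $\mat{M}^{(t)}\bigdot\mat{P}^{(t)}$ by linearity, restrict to $i\in B^{(t)}$ since $\delta^{(t)}$ is supported there, and use the defining inequality of $B^{(t)}$ together with the lower bound $\trace{\mat{W}^{(t)}}>(1+\vareps)^{p-1}$ to get $\mat{A}_i\bigdot\mat{P}^{(t)}\leq(1+\vareps)^2$. Your added remarks on why the discretization is there and on the nonnegativity of $\delta^{(t)}$ are accurate and make the write-up slightly more explicit than the paper's version.
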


\begin{claim}
  \label{claim:totalx}
  For $t=1,\dots, T$,
  \[
  \norm[1]{\xx^{(t)}} \leq (1 + \vareps)K
  \]
\end{claim}

Now for any iteration $t \leq T$, we can rewrite $\mat{\Psi}^{(t)}$ as
\[
\mat{\Psi}^{(t)} = \sum_{i=1}^n x^{(0)}_i \mat{A}_i + \sum_{\tau=1}^t
\sum_{i=1}^n\delta^{(\tau)}_i \mat{A}_i =
\sum_{i=1}^n x^{(0)}_i \mat{A}_i + \vareps \sum_{\tau=1}^t \mat{M}^{(\tau)},
\]
so
\begin{equation}
  \lambda_{\max}(\mat{\Psi}^{(t)}) \leq 
       \lambda_{\max}\left(
         \sum_{i=1}^n x^{(0)}_i \mat{A}_i
       \right) + 
  \vareps \cdot \lambda_{\max}\left(
       \sum_{\tau=1}^t \mat{M}^{(\tau)}
  \right) 
  \leq 1 + \vareps\cdot \lambda_{\max}\left(
       \sum_{\tau=1}^t \mat{M}^{(\tau)}
  \right) 
  \label{eq:lmax-at-t}
\end{equation}
since both sums yield positive semidefinite matrices and the $\lambda_{\max}$ of
the first sum is at most $1$ by Claim~\ref{claim:par-bc-lmax}.

\medskip

\begin{proofof}{Lemma~\ref{lem:par-spec-bound}}
  We will prove \eqref{eq:par-spec-bound} by (strong) induction on $t$.  The
  base case of $t = 0$ is true by Claim~\ref{claim:par-bc-lmax}.  For a given
  $t$, if we inductively assume that
  $\mat{\Psi}^{(\tau)} \mleq (1 +10\vareps)K\mident$ for all $\tau < t$, then
  for each $1 \leq \tau < t$,
  \begin{align*}
    \mat{M}^{(\tau)} &= \frac1{\vareps}\sum_{i=1}^n \delta_i^{(\tau)} \mat{A}_i\\
                     &\mleq \frac{\alpha}{\vareps}\sum_{i=1}^n x_i^{(\tau-1)} \mat{A}_i \\
                     &= \frac{\vareps/K}{\vareps(1+10\vareps)}\sum_{i=1}^n
                       \mat{\Psi}^{(\tau-1)} \\
                     &\mleq \frac{\vareps/K}{\vareps(1+10\vareps)}
                       (1+10\vareps)K\mident \mleq \mident.
  \end{align*}
  This makes Theorem~\ref{thm:mmwu} applicable, which gives
  \begin{align*}
    \vareps\cdot{}\lambda_{\max}\left(\sum_{\tau=1}^t \mat{M}^{(\tau)}\right)
    &\leq   \vareps(1+\vareps) \sum_{\tau=1}^t \mat{M}^{(\tau)} \bigdot \mat{P}^{(\tau)}
      + \ln n \\
    &\leq  \vareps(1+\vareps) \sum_{\tau=1}^t \frac{(1+\vareps)}{\vareps} \cdot \vone^\tr{}\delta^{(\tau)} + \ln n &&\text{by \textsf{Claim~\ref{claim:par-step-size}}}\\
    &\leq (1+\vareps)^2 \vone^\tr{} x^{(t)} + \ln n && \text{by definition of $\delta^{(t)}$}\\
    &\leq (1+\vareps)^3 K + \ln n && \text{by \textsf{Claim~\ref{claim:totalx}}}
  \end{align*}
  Plugging this into \eqref{eq:lmax-at-t} yields
  \begin{align*}
    1+ \ln n + (1+\vareps)^3 K \leq \vareps K + (1+\vareps)^3 K,
  \end{align*}
  which allows us to conclude that
  $\mat{\Psi}^{(t)} \mleq (1+10\vareps)K\mident$, as desired.
\end{proofof}

\medskip

It remains to show the claims about the algorithm
utilized in the above proof.

\begin{proof}[of~Claim~\ref{claim:par-bc-lmax}]
  Our choice of $\xx^{(0)}$ guarantees that for all $i=1,\dots, n$,
  \[
  x_i^{(0)}\mat{A}_i = \frac{1}{n\trace{\mat{A}_i}} \mat{A}_i
  \mleq \frac{1}n \mident.
  \]
  Summing across $i=1,\dots, n$ gives the desired bound.
\end{proof}

\begin{proof}[of Claim~\ref{claim:par-step-size}]
  Consider that by definition,
  \begin{align*}
    \mat{M}^{(t)} \bigdot \mat{P}^{(t)}
    &= \frac1{\vareps}\left(\sum_{i=1}^n \delta_i^{(t)} \mat{A}_i\right)\bigdot \mat{P}^{(t)} = \frac1{\vareps}\left(\sum_{i \in B^{(t)}} \delta_i^{(t)} \mat{A}_i\bigdot \mat{P}^{(t)}\right)
  \end{align*}

  Now every $i \in B^{(t)}$, though $B^{(t)}$ can be empty, has the property
  that $\mat{A}_i\bigdot \mat{P}^{(t)} \leq (1+\vareps)$, so
  \begin{equation*}
    \mat{M}^{(t)} \bigdot \mat{P}^{(t)} \leq
    \frac{1+\vareps}{\vareps}\sum_{i\in B^{(t)}} \delta_i^{(t)} 
    \leq \frac{1+\vareps}{\vareps} \norm[1]{\delta^{(t)}},
  \end{equation*}
  which completes the proof.
\end{proof}

\begin{proof}[of~Claim~\ref{claim:totalx}]
  The condition of the \textbf{while}-loop ensures that for $t < T$ (i.e., prior
  to the final iteration), $\norm[1]{\xx^{(t)}} \leq K$.  For iteration $T$, we
  know that
  $\norm[1]{\xx^{(T)}} = \norm[1]{\xx^{(T-1)}} + \norm[1]{\delta^{(T)}}$ because
  $\delta^{(T)} \in \R^n_+$.  By our choice of $\alpha$, we know that
  $\alpha \leq \vareps$ and therefore
  $\norm[1]{\delta^{(T)}} = \alpha \norm[1]{\xx_{B^{(T)}}^{(T-1)}} \leq \vareps
  K$.
  Substituting this into the equation above gives
  $\norm[1]{\xx^{(T)}} \leq (1 + \vareps)K$, which proves the claim.
\end{proof}

It remains to examine the case where the algorithm returns a primal solution:
Equation~\eqref{eq:trace-of-P} gives
\[
  \trace{\overline{\mat{Y}}} = \frac{1}{T}\sum_{\tau=1}^{T} \trace{\mat{P}^{(\tau)}} = 1,
\]
Furthermore, this $\overline{\mat{Y}}$ satisfies the primal constraints:
\begin{lemma}
\label{lem:primalSolution}
If $\norm[1]{\xx^{(T)}} \leq K$---i.e, the algorithm exits the
\textbf{while}-loop because it reaches $R$ iterations---then for all
$i = 1, \dots, n$, $\mat{A}_i \bigdot \overline{\mat{Y}} \geq 1$.
\end{lemma}

\begin{proof}
  Assume for a contradiction that there is an $i \in [n]$ such that
  $\mat{A}_i \bigdot \overline{\mat{Y}} < 1$.  This means
  \[
  \frac{1}{T}\sum_{\tau=1}^T \mat{P}^{(\tau)} \bigdot \mat{A}_i < 1.
  \]
  Let $U = \{ \tau : \mat{P}^{(\tau)} \bigdot \mat{A}_i < 1 + \vareps\}$ be the
  iterations in which the $i$-th coordinate of $\xx$ is incremented.  By
  Markov's inequality, the number of such iterations is bounded by
  $|U| < \frac{\vareps}{1+\vareps}T$.  But then, every time the $i$-th
  coordinate changes, it increases by a factor of $1 + \alpha$, so
  \begin{align*}
  x_i^{(T)} > x_i^{(0)}(1+\alpha)^{\frac{\vareps}{1+\vareps}T} 
    >  x_i^{(0)}\exp\left(
    \frac{\alpha}{2}\cdot \frac{\vareps{}T}{1+\vareps}
    \right)
  \end{align*}
  Because $\norm[1]{\xx^{(T)}} \leq K$, the algorithm exits the
  \textbf{while}-loop with $T = R$. Therefore, for $0 < \vareps < 1$,
  \begin{align*}
    x_i^{(T)} > x_i^{(0)}\exp\left(
    \frac{\alpha\vareps}{2(1+\vareps)}\cdot R
    \right) 
    > x_i^{(0)}\cdot n^8
    > \frac{n^8}{n \trace{\mat{A}_i}} > \Omega(n^4)
  \end{align*}
  as $\trace{\mat{A}_i} \leq O(n^3)$ by Lemma~\ref{lem:trace-limit}.  This is a
  contradiction to $\norm[1]{\xx^{(T)}} \leq K$, which proves the lemma.
\end{proof}

We will now piece everything together:

\begin{proofof}{Theorem~\ref{thm:decisionMain}}
The algorithm terminates after at most $R$ iterations.
Notice that $B^{(t)}$ may be empty in some iterations but this does
not harm the algorithm nor the proof. It is standard to check that
$R = O(\vareps^{-3} \log^2 n)$.
If we do run this many iterations, then Lemma~\ref{lem:primalSolution}
gives that we terminate with a primal solution.

Otherwise, Lemma~\ref{lem:par-spec-bound} gives that at any point in
the algorithm, the solution vector $\xx^{(t)}$ satisfies
$\sum_i x^{(t)}_i \mat{A}_i \mleq (1 + 10\vareps)K\mident$.
Together with Equation \eqref{eq:ell-one-dual-sol}, we know that any
$\xxhat$ returned $\norm[1]{\xx^*} \geq 1 - 10\vareps$ and
\[
\sum_i \xxhat _i \mat{A}_i = \tfrac{1}{(1+10\vareps)K}\sum_i x^{(t)}_i \mat{A}_i
\mleq \mident.
\]
Thus, $\xxhat$ is indeed a dual solution with value at least $1 - 10\vareps$.
Replacing $\vareps$ with $\vareps/10$ then meets the requirements of
the decision problem.
\end{proofof}


\section{Matrix Exponential Evaluation}

\label{sec:exp}

We describe a fast algorithm for computing the matrix dot product of a positive
semidefinite matrix and the matrix exponential of another positive semidefinite
matrix.

\begin{theorem}
\label{thm:exp}
There is an algorithm $\procName{bigDotExp}$ that when given a $m$-by-$m$ matrix
$\mat{\Phi}$ with $p$ non-zero entries, $\kappa \geq \max\{1,
\norm[2]{\mat{\Phi}} \}$, and $m$-by-$m$ matrices $\mat{A}_i$ in factorized form
$\mat{A}_i = \mat{Q}_i\mat{Q}_i^\tr$ where the total number of nonzeros across
all $\mat{Q}_i$ is $q$; $\procName{bigDotExp}(\mat{\Phi},\{\mat{A}_i =
\mat{Q}_i\mat{Q}_i^\tr\}_{i=1}^n)$ computes $(1 \pm \vareps)$ approximations to
all $\exp{(\mat{\Phi})} \bigdot \mat{A}_i$ in $O(\kappa \log{m} \log
(\nicefrac{1}{\vareps}))$ depth and $O(\frac{1}{\vareps^2}(\kappa \log
(\nicefrac{1}{\epsilon}) p + q)\log{m})$ work.
\end{theorem}

The idea behind Theorem~\ref{thm:exp} is to approximate the matrix exponential
using a low-degree polynomial because evaluating matrix exponentials exactly is
costly.  For this, we will apply the following lemma, reproduced from Lemma~6
in~\cite{AroraK:stoc07}:
\begin{lemma}[\cite{AroraK:stoc07}]
\label{lem:expapprox}
If $\mat{B}$ is a PSD matrix such that $\norm[2]{\mat{B}} \leq \kappa$, then the
operator
\begin{align*}
  \widehat{\mat{B}} = \sum_{0 \leq i < k} \frac{1}{i!} \mat{B}^i \qquad \text{
    where } k = \max\{e^2\kappa, \ln(2\vareps^{-1})\}
\end{align*}
satisfies
\begin{align*}
(1  - \vareps)\exp{(\mat{B})}
\preceq \widehat{\mat{B}}
\preceq \exp{(\mat{B})}.
\end{align*}
\end{lemma}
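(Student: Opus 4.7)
The plan is to diagonalize $\mat{B}$ and reduce the two matrix inequalities to scalar statements about the truncated exponential series on each eigenvalue. Since $\mat{B}$ is symmetric PSD, write $\mat{B} = \sum_j \lambda_j \mat{v}_j \mat{v}_j^\tr$ with $0 \leq \lambda_j \leq \kappa$. Both $\exp(\mat{B})$ and $\widehat{\mat{B}}$ are polynomials (or power series) in $\mat{B}$, so they share the eigenvectors $\mat{v}_j$; the corresponding eigenvalues are $\exp(\lambda_j)$ and $\widehat{s}(\lambda_j) := \sum_{0 \leq i < k} \lambda_j^i / i!$, respectively. The matrix inequalities $(1-\vareps)\exp(\mat{B}) \preceq \widehat{\mat{B}} \preceq \exp(\mat{B})$ then hold if and only if $(1-\vareps)\exp(\lambda) \leq \widehat{s}(\lambda) \leq \exp(\lambda)$ for every $\lambda \in [0,\kappa]$.

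The upper bound is immediate: since $\lambda \geq 0$, every term in the Taylor expansion $\exp(\lambda) = \sum_{i \geq 0} \lambda^i / i!$ is nonnegative, so truncating at $k$ only decreases the sum, giving $\widehat{s}(\lambda) \leq \exp(\lambda)$. For the lower bound, it suffices to control the tail, i.e.\ to show
\[
R(\lambda) := \exp(\lambda) - \widehat{s}(\lambda) = \sum_{i \geq k} \frac{\lambda^i}{i!} \leq \vareps \exp(\lambda).
\]
Because $\exp(\lambda) \geq 1$ on $[0,\kappa]$, it actually suffices to show $R(\lambda) \leq \vareps$.

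To bound $R(\lambda)$, I would use Stirling's inequality $i! \geq (i/e)^i$, so each term satisfies $\lambda^i / i! \leq (\lambda e / i)^i$. The key point of the choice $k \geq e^2 \kappa$ is that for all $i \geq k$ and $\lambda \leq \kappa$,
\[
\frac{\lambda e}{i} \;\leq\; \frac{\kappa e}{k} \;\leq\; \frac{\kappa e}{e^2 \kappa} \;=\; e^{-1},
\]
which forces geometric decay: $\lambda^i / i! \leq e^{-i}$. Summing the geometric tail gives
\[
R(\lambda) \;\leq\; \sum_{i \geq k} e^{-i} \;=\; \frac{e^{-k}}{1-e^{-1}}.
\]
Now the second requirement $k \geq \ln(2 \vareps^{-1})$ kicks in: it gives $e^{-k} \leq \vareps/2$, so $R(\lambda) \leq (\vareps/2)/(1-e^{-1}) \leq \vareps$, as needed.

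The main obstacle, and the whole reason the lemma takes the stated form, is getting the tail of the series to decay geometrically uniformly in $\lambda \in [0,\kappa]$. This is why $k$ must scale linearly with $\kappa$ rather than just with $\log(1/\vareps)$: one first pays $\Theta(\kappa)$ terms to ensure the ratio $\lambda e / i$ falls comfortably below $1$, and only after that does the additional $\log(1/\vareps)$ terms produce the $\vareps$-accurate truncation via a geometric sum. Apart from this balancing, the remaining steps (diagonalization, term-by-term comparison, Stirling, geometric series) are standard.
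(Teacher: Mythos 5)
Your proof is correct. Note that the paper does not actually prove this lemma itself---it cites it verbatim as Lemma~6 of Arora and Kale---so there is no ``paper's own proof'' to diverge from. Your argument (diagonalize $\mat{B}$ to reduce both matrix inequalities to scalar ones about the truncated exponential series on $[0,\kappa]$; use Stirling's bound $i!\ge (i/e)^i$ so that $k\ge e^2\kappa$ forces the tail terms below $e^{-i}$; then use $k\ge\ln(2\vareps^{-1})$ and the geometric sum to get the tail under $\vareps$) is precisely the standard argument behind the Arora--Kale lemma, and your numerics check out: $\tfrac{\vareps/2}{1-e^{-1}}\approx 0.79\,\vareps\le\vareps$.
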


\begin{proof}[of Theorem~\ref{thm:exp}]
  The given factorization of each $\mat{A}_i$ allows us to write
  $\exp{(\mat{\Phi})} \bigdot \mat{A}_i$ as the $2$-norm of a vector:
  \begin{align*}
    \exp{(\mat{\Phi})} \bigdot \mat{A}_i
    = & \trace{\exp{(\mat{\Phi})} \mat{Q}_i\mat{Q}_i^\tr} \\
    = & \trace{\mat{Q}_i^\tr \exp{(\tfrac{1}{2} \mat{\Phi})} \exp{(\tfrac{1}{2} \mat{\Phi})} \mat{Q}_i} \\
    = & \norm[2]{\exp{(\tfrac{1}{2} \mat{\Phi})} \mat{Q}_i}
  \end{align*}

  By Lemma \ref{lem:expapprox}, it suffices to evaluate $\widehat{\mat{B}}
  \bigdot \mat{A}_i$ where $\widehat{\mat{B}}$ is an approximation to $\mat{B} =
  \exp(\tfrac{1}{2}\mat{\Phi})$.  To further reduce the work, we can apply the
  Johnson-Lindenstrauss transformation~\cite{DasguptaG:rsa03,IndykM98} to reduce the
  length of the vectors to $O(\log{m})$; specifically, we find a
  $O(\frac{1}{\vareps^2}\log{m}) \times m$ Gaussian matrix $\mat{\Pi}$ and
  evaluate
\begin{align*}
\norm[2]{\mat{\Pi} \widehat{\mat{B}} \mat{Q}_i}
\end{align*}
Since $\mat{\Pi}$ only has $O(\frac1{\vareps^2}\log{m})$ rows, we can compute
$\mat{\Pi} \widehat{\mat{B}}$ using $O(\log{m})$ evaluations of
$\widehat{\mat{B}}$.  The work/depth bounds follow from doing each of the
evaluations of $\widehat{\mat{B}}\Pi_i$, where $\Pi_i$ denotes the $i$-th column
of $\mat{\Pi}$, and matrix-vector multiplies involving $\mat{\Phi}$ in parallel.
\end{proof}










\section{Conclusion}
\label{sec:concl}

We presented a simple \NC parallel algorithm for packing SDPs that requires
$O(\frac1{\vareps^4} \log^4 n \log(\frac{1}{\vareps}))$ iterations, where each
iteration involves only simple matrix operations and computing the trace of the
product of a matrix exponential and a positive semidefinite matrix.  When a
positive SDP is given in a factorized form, we showed how the dot product with
matrix exponential can be implemented in nearly-linear work, leading to an
algorithm with $\otilde(m+n+q)$ work, where $n$ is the number of constraint
matrices, $m$ is the dimension of these matrices, and $q$ is the total number of
nonzero entries in the factorization.

Compared to the situation with positive LPs, the classification of positive SDPs
is much richer because packing/covering constraints can take many forms, either
as matrices (e.g. $\sum_{i=1}^n x_i \mat{A}_i \mleq \mident$ for packing,
$\sum_{i=1}^n x_i \mat{A}_i \mgeq \mident$ for covering) or as dot products
between matrices (e.g. $\mat{A}_i \bigdot \mat{Y} \leq 1$ for packing,
$\mat{A}_i \bigdot \mat{Y} \geq 1$ for covering).  The positive SDPs studied in
\cite{JainY:focs11} and our paper should be compared with the closely related
notion of covering SDPs studied by Iyengar et al~\cite{IyengarPS:swat10};
however, among the applications they examine, only the beamforming SDP
relaxation discussed in Section 2.2 of \cite{IyengarPS:swat10} falls completely
within the framework of packing SDPs as defined in \ref{eq:simplified-sdp}.
Problems such as \textsc{MaxCut} and \textsc{SparsestCut} require additional
matrix-based packing constraints.  We believe extending these algorithms to
solve mixed packing/covering SDPs is an interesting direction for future work.

\section*{Acknowledgments}
This work is partially supported by the National Science Foundation under grant
numbers CCF-1018463, CCF-1018188, and CCF-1016799 and by generous gifts from
IBM, Intel, and Microsoft.
Richard Peng was partly supported by a Microsoft Research PhD. Fellowship.

We thank the SPAA reviewers, as well as Guy Blelloch and Gary Miller
for suggestions that helped improve this paper.
While making this revision, we benefitted greatly from discussions with
Jon Kelner and Di Wang.


\setlength{\bibsep}{1pt}
\begin{small}
\bibliographystyle{alpha}

\end{small}
}{
}

\begin{appendix}
\section{Normalized Positive SDPs}
\label{sec:norm-posit-sdps}
This is the same transformation that Jain and Yao presented~\cite{JainY:focs11};
we only present it here for easy reference.

Consider the primal program in \eqref{eq:covering}. It suffices to show that
it can be transformed into the following program without changing the optimal
value:
\begin{align}
  \label{eq:covering-renormalized}
  \begin{array}{l l l}
    \text{Minimize} & \trace{\mat{Z}}\\
    \text{Subject to:} & \mat{B}_i \bigdot \mat{Z}  \geq 1 &\text{ for } i = 1, \dots, m\\
    &\mat{Z}  \mgeq \mat{0},
  \end{array}
\end{align}

We can make the following assumptions without loss of generality: First, $b_i >
0$ for all $i = 1, \dots, m$ because if $b_i$ were $0$, we could have thrown it
away.  Second, all $\mat{A}_i$'s are the support of $\mat{C}$, or otherwise we
know that the corresponding dual variable must be set to $0$ and therefore can
be removed right away.  Therefore, we will treat $C$ as having a full-rank,
allowing us to define
\[
\mat{B}_i \eqdef \frac1{b_i} \mat{C}^{-1/2}\mat{A}_i \mat{C}^{-1/2}
\]
It is not hard to verify that the normalized program
\eqref{eq:covering-renormalized} has the same optimal value as the original SDP
\eqref{eq:covering}.

Note that if we're given factorization of $\mat{A}_i$ into
$\mat{Q}_i\mat{Q}_i^\tr$, then $\mat{B}_i$ can also be factorized as:

\[
\mat{B}_i = \frac1{b_i} (\mat{C}^{-1/2}\mat{Q}_i) (\mat{C}^{-1/2} \mat{Q}_i)^\tr
\]

Furthermore, it can be checked that the dual of the normalized program is
the same as the dual in Equation \ref{eq:simplified-sdp}.


\end{appendix}

\end{document}